\documentclass[pdflatex,sn-apa]{sn-jnl}%
\usepackage[utf8]{inputenc}
\usepackage[T1]{fontenc}
\usepackage{amsmath}
\usepackage{amsfonts}
\usepackage{amssymb}
\usepackage{amsthm}
\usepackage{empheq}
\usepackage{mathtools}
\usepackage{geometry}
\usepackage{setspace}
\usepackage{hyperref}
\usepackage{graphicx}
\usepackage{float}
\usepackage{booktabs}
\usepackage{array}
\usepackage{enumerate}
\usepackage{enumitem}
\usepackage{amsfonts}
\usepackage{algorithm}
\usepackage{algorithmicx}
\usepackage{algpseudocode}
\usepackage{graphicx}
\usepackage{float}
\usepackage{tikz}
\usetikzlibrary{arrows.meta, positioning, shapes}

% Common macros
\newcommand{\R}{\mathbb{R}}            % real numbers
            % natural numbers
\newcommand{\Q}{\mathbb{Q}}            % rational numbers
\newcommand{\E}{\mathbb{E}}            % expectation operator
  % covariance
  % variance
          % differential d
           % exponential base

\theoremstyle{thmstyleone}

\newtheorem{remark}{Remark}
\newtheorem{proposition}{Proposition}
%% ---------------------------------------------------------------------------
%% Begin document
\begin{document}

%% ---------------------------------------------------------------------------
%% Title and author information
\title{Lévy-Driven Option Pricing without a Riskless Asset}
\author*[1,2]{\fnm{Ziyao} \sur{Wang}}\email{ziywang@ttu.edu}
\affil*[1]{\orgdiv{Department of Mathematics and Statistics}, \orgname{Texas Tech University}, \orgaddress{\street{1108 Memorial Circle}, \city{Lubbock}, \postcode{79409}, \state{TX}, \country{USA}}}
\affil*[2]{\orgdiv{Carey Business School}, \orgname{Johns Hopkins University}, \orgaddress{\street{100 International Drive}, \city{Baltimore}, \postcode{21202}, \state{MD}, \country{USA}}}
\abstract{
We extend the Lindquist--Rachev (LR) option-pricing framework---which values derivatives in markets lacking a traded risk-free bond---by introducing common L\'evy jump dynamics across two risky assets. The resulting endogenous ``shadow'' short rate replaces the usual risk-free yield and governs discounting and risk-neutral drifts. We focus on two widely used pure-jump specifications: the Normal Inverse Gaussian (NIG) process and the Carr--Geman--Madan--Yor (CGMY) tempered-stable process. Using It\^o--L\'evy calculus we derive an LR partial integro-differential equation (LR-PIDE) and obtain European option values through characteristic-function methods implemented with the Fast Fourier Transform (FFT) and Fourier-cosine (COS) algorithms. Calibrations to S\&P~500 index options show that both jump models materially reduce pricing errors and fit the observed volatility smile far better than the Black--Scholes benchmark; CGMY delivers the largest improvement. We also extract time-varying shadow short rates from paired asset data and show that sharp declines coincide with liquidity-stress episodes, highlighting risk signals not visible in Treasury yields. The framework links jump risk, relative asset pricing, and funding conditions in a tractable form for practitioners.
}

\keywords{option pricing; L\'evy jumps; Lindquist--Rachev framework; Normal Inverse Gaussian; CGMY; shadow short rate}

\maketitle

\section{Introduction}
\label{sec:intro}

The valuation of financial derivatives traditionally relies on the existence of a traded, risk‑free asset to serve as a benchmark for discounting.  The celebrated Black--Scholes model \citep{BlackScholes1973} and subsequent jump‑diffusion extensions \citep{Merton1976} incorporate this assumption by modelling the short rate exogenously.  In contrast, the Lindquist--Rachev (LR) framework endogenises the ``risk‑free'' rate by considering two risky assets without an explicitly traded bond \citep{LindquistRachev2025}.  Derivative prices are obtained through a relative‑pricing argument in which one asset serves as numéraire; the implied endogenous shadow short rate governs discounting and risk‑neutral drifts.

While the original LR formulation assumes continuous sample paths, empirical evidence suggests that asset returns exhibit jumps and heavy tails \citep{CarrGemanMadanYor2002, BarndorffNielsen1998}.  To accommodate these features we introduce common L\'evy jump dynamics into both risky assets.  Specifically, we study two pure‑jump processes widely used in practice: the Normal Inverse Gaussian (NIG) process \citep{BarndorffNielsen1998} and the Carr--Geman--Madan--Yor (CGMY) tempered‑stable process \citep{CarrGemanMadanYor2002}.  These processes possess tractable characteristic functions and flexible kurtosis and skewness parameters, making them well suited for option pricing.

\subsection{Lindquist--Rachev (LR) Framework}

Recent work by \cite{LindquistRachev2025} proposes alternative approaches to option pricing that relax the classical reliance on a riskless asset. In their first approach, a market is considered with no riskless asset -- instead, two risky assets serve to complete the market and replicate payoffs. A dynamic trading strategy in these assets produces a ``shadow'' riskless rate $\bar{r}(t)$ endogenously. Their second approach does allow a conventional bank account, but enforces equality of real-world ($\mathbb{P}$) and risk-neutral ($\Q$) probabilities of price changes, so that option prices can be computed as expectations under the empirical measure. These approaches yield what we will call the Lindquist--Rachev PDE (LR-PDE) for option prices. 

The LR framework builds on earlier results by \cite{RachevStoyanovFabozzi2017}, who derived Black--Scholes--Merton-type pricing equations for markets with only risky assets. Those authors considered various underlying dynamics (continuous diffusions, jump-diffusions, stochastic volatility, even fractional Brownian motion) in the absence of a truly safe asset. The LR framework provides a unified approach to such markets, introducing the concept of a shadow riskless rate $\bar{r}(t)$ that plays the role of the risk-free rate in pricing formulas even when no treasury or bank account is available.

\subsection{Need for Jump-Driven Models}

Classical Black--Scholes models assume continuous Brownian motion for asset prices, but empirical evidence shows asset returns have jumps, heavy tails, and skewness (leptokurtosis) that Gaussian models cannot capture. \cite{Merton1976} first extended Black--Scholes by superimposing a Poisson jump process on the stock's continuous diffusion. Since then, a rich literature on Lévy process models has developed. Lévy processes are stochastic processes with stationary independent increments; they can capture jumps of all sizes, including pure-jump processes with infinitely many small jumps. Notable examples include the Variance Gamma (VG) process \cite{CarrMadan1999}, the Normal Inverse Gaussian (NIG) process \cite{BarndorffNielsen1998}, and the CGMY process \cite{CarrGemanMadanYor2002}. These are all pure-jump Lévy models which have proven successful in fitting market option data. 

For instance, the NIG model (a subclass of the generalized hyperbolic family) can capture the heavy tails and slight skew observed in equity returns. The CGMY (also known as KoBoL) model provides a tempered stable jump distribution with parameters controlling tail fatness and jump frequency, subsuming VG as a special case. Such models resolve the smile anomaly that plagues Black--Scholes: when calibrated to market option prices, they produce implied volatility surfaces much closer to observed shapes.

\subsection{Objective}

In this paper, we extend the Lindquist--Rachev option pricing framework to a jump-driven setting using Lévy processes. Our contributions are: 
\begin{enumerate}
\item We formulate a two-asset market model where the assets share a common pure-jump Lévy driver (e.g. NIG, CGMY, or Variance Gamma) in addition to Brownian motions. 
\item Using Itô--Lévy calculus, we derive the Lindquist--Rachev Partial Integro-Differential Equation (LR-PIDE) that governs option prices in this jump-diffusion market, and explicate the construction of the shadow short rate $\bar{r}(t)$ in this context. 
\item We discuss analytical solutions via Feynman--Kac representations and characteristic functions for European option pricing under these Lévy dynamics. 
\item We examine special cases: NIG, CGMY, and VG processes are described in detail, showing how their parameters enter pricing formulas. 
\item We outline numerical solution methods -- notably the Carr--Madan FFT approach and the COS Fourier cosine expansion method -- which efficiently compute option prices given the characteristic function of the jump process. 
\item We present an empirical analysis on S\&P 500 index options: calibrating the NIG and CGMY models to market data (using Maximum Likelihood Estimation and Generalized Method of Moments), computing model-implied option prices, and comparing the pricing errors (RMSE) against observed market prices. 
\item Finally, we discuss the results and perform robustness checks on the calibration and pricing performance.
\end{enumerate}

\section{Model Setup: Two Assets with Brownian and Common Lévy Jumps}

We consider a frictionless continuous-time market (no arbitrage, no transaction costs) with two traded risky assets, denoted $S(t)$ and $Z(t)$, and no traditional riskless asset. Following the LR approach, these two assets will span the market and allow dynamic hedging. We assume all randomness is driven by two independent sources: a Brownian motion $W(t)$ and a pure-jump Lévy process $L(t)$ (with jump measure $\nu$). Intuitively, $W(t)$ represents continuous fluctuations (diffusive risk) while $L(t)$ accounts for sudden jumps. Crucially, we assume both assets share the same jump process $L(t)$, i.e. they experience common jumps at the same times, albeit possibly with different sensitivities. This introduces co-jumps (simultaneous jumps) in $S$ and $Z$, reflecting market-wide jump events (e.g. macro news affecting all assets). The Brownian motion $W(t)$ may be shared or correlated between the assets as well -- for simplicity, we first assume a single Brownian $W(t)$ drives both assets (so their continuous parts are perfectly correlated). In summary, the model has two risk factors: one Brownian factor and one jump factor, and two underlying assets to span them.

\subsection{Asset Dynamics under the Physical Measure $\mathbb{P}$}

We specify the stochastic differential equations (SDEs) for $S$ and $Z$. Let $W(t)$ be a standard Brownian motion under $\mathbb{P}$, and let $N(dt,dx)$ be a Poisson random measure for the jumps of $L(t)$, with intensity $\nu(dx)dt$ (where $\nu(dx)$ is the Lévy measure describing the distribution of jump sizes). The compensated measure is $\tilde{N}(dt,dx) = N(dt,dx) - \nu(dx)dt$. We write the SDEs in differential form including both diffusion and jump terms:

\begin{itemize}
\item \textbf{Stock $S(t)$:}
\begin{equation}\label{S_SDE}
\frac{dS(t)}{S(t^-)} = \mu_S(t)dt + \sigma_S(t)dW(t) + \int_{\R} \left(e^{\kappa_S x} - 1\right)\tilde{N}(dt,dx),
\end{equation}
where $\mu_S(t)$ is the instantaneous drift under $\mathbb{P}$, $\sigma_S(t)$ is the diffusion volatility, and $\kappa_S$ is a constant scaling factor linking the jump size $x$ of $L(t)$ to the jump in $S$. Specifically, a jump of size $x$ in $L$ causes a multiplicative jump factor $e^{\kappa_S x}$ in $S$. (If $x$ is the log-jump, then $\kappa_S=1$ would mean $S$ jumps by $e^x$ factor. In general $\kappa_S$ could allow $S$ and $Z$ to have jumps of different magnitude from the same $L$ increment.)

\item \textbf{Stock $Z(t)$:}
\begin{equation}\label{Z_SDE}
\frac{dZ(t)}{Z(t^-)} = \mu_Z(t)dt + \sigma_Z(t)dW(t) + \int_{\R} \left(e^{\kappa_Z x} - 1\right)\tilde{N}(dt,dx),
\end{equation}
with analogous interpretation: $\mu_Z(t)$ is drift, $\sigma_Z(t)$ volatility, and $\kappa_Z$ scales the common jump $x$ for $Z$.
\end{itemize}

Here $dt$ terms represent the normal drift, $dW(t)$ terms the continuous Gaussian shocks, and the integral terms represent jumps. $S(t^-)$ and $Z(t^-)$ denote the left limits (values just before a jump, since at a jump time the processes have a discontinuity). The integrals $\int (e^{\kappa_S x}-1)\tilde{N}(dt,dx)$ are written in Itô form; they equal $\int (e^{\kappa_S x}-1) N(dt,dx) - \Lambda_S dt$, where $\Lambda_S = \int (e^{\kappa_S x}-1)\nu(dx)$ is the compensator ensuring $\E[dS] = \text{drift} \cdot dt$. Likewise for $Z$. In words, $N(dt,dx)$ counts jumps of size in $dx$ in time $dt$, and $e^{\kappa_S x}-1$ is the fractional change in $S$ due to such a jump.

\subsection{Discussion of Common Jumps}

The assumption of a shared jump driver means that whenever a jump occurs (say at time $\tau$ with jump magnitude $x$), both $S$ and $Z$ jump simultaneously. If $\kappa_S$ and $\kappa_Z$ are equal (e.g. both = 1), then the fractional jump $(e^x-1)$ is identical for both assets -- they move in perfect lockstep during jumps. In a more general case with $\kappa_S \neq \kappa_Z$, the assets still jump at the same time, but possibly by different percentages. For example, if $x$ is positive, $S$ might jump up by 5\% while $Z$ jumps up by 3\% if $\kappa_S > \kappa_Z$. Nonetheless, the jump directions are 100\% positively correlated and come from one source $L(t)$. This setup captures systematic jumps (market crashes or booms affecting all assets together). It also ensures that the two risky assets together can potentially hedge the jump risk: since both experience the same jump events, a suitable linear combination might cancel out the jump impact (we will investigate this hedging possibility below).

\subsection{Market Completeness Considerations}

In a market with jumps and no riskless asset, completeness is not trivial. Generally, a jump diffusion model with one stock and no traded jump-insurance leads to an incomplete market -- jump risk cannot be fully hedged by continuous trading in the stock and bond alone. Here, we have two traded assets for two independent risk factors ($W$ and $L$), so intuitively the market could be complete. However, because jumps are sudden, hedging them requires the right pre-jump position in assets (one cannot readjust at the instant of a jump). If $S$ and $Z$ jumped in exact proportion, as in the $\kappa_S=\kappa_Z$ case, then any portfolio of $S$ and $Z$ will also jump by that same proportion -- meaning jump risk cannot be eliminated by any fixed portfolio (the jump is common mode) unless the portfolio has zero value. To allow hedging of jump risk, it is important that $S$ and $Z$ respond differently to the jump driver ($\kappa_S \neq \kappa_Z$). In that case, one can attempt to choose a mix of $S$ and $Z$ such that the net jump impact cancels out.

\begin{remark}
Even with $\kappa_S \neq \kappa_Z$, true perfect hedging of jumps is generally impossible unless one of the assets effectively serves as ``insurance'' against jumps. In practice, additional securities (like derivatives or jump risk bonds) are needed for full completeness. In the LR framework, however, one may proceed by selecting an equivalent martingale measure (EMM) for pricing (e.g. via an Esscher transform or minimal entropy measure) to obtain unique prices in an incomplete market. In our development, we will proceed under the assumption that an appropriate risk-neutral measure $\Q$ has been chosen such that the shadow pricing formula holds (as we derive below). This approach yields a unique option price even if strict replication is not feasible, consistent with the spirit of LR's second approach (using $\mathbb{P}$ probabilities for $\Q$ pricing). We will highlight the role of the shadow interest rate $\bar{r}(t)$ which emerges from this measure change.
\end{remark}
\subsection{Market Completeness under a Common L\'evy Driver}
\label{subsec:completeness}

\begin{proposition}
Let $L_t$ be a one-dimensional L\'evy process with L\'evy triplet
$(\mu_L,\sigma_L,\nu_L)$ and assume $\sigma_L>0$.  
Suppose two traded assets follow
\[
\frac{\mathrm dS_t}{S_t}= \mu_S \,\mathrm dt + \sigma_S \,\mathrm dL_t,
\quad
\frac{\mathrm dZ_t}{Z_t}= \mu_Z \,\mathrm dt + \sigma_Z \,\mathrm dL_t,
\]
with $\sigma_S\neq\sigma_Z$.  
Then the two-asset market is complete:
every square-integrable contingent claim $H\in L^2(\mathcal F_T)$
admits a unique self-financing replication.
\end{proposition}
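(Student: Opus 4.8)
The plan is to reduce the two–asset replication problem to a one–dimensional one by a change of numéraire, and then to establish a martingale (predictable) representation for the resulting relative price; the two standing hypotheses $\sigma_L>0$ and $\sigma_S\neq\sigma_Z$ are precisely what make that representation available. \emph{Step 1.} Take $Z$ as numéraire and set $X_t:=S_t/Z_t$. Since $S$ and $Z$ are stochastic exponentials of the common driver $L$, the It\^o--L\'evy quotient rule shows $X$ is again driven by $L$ alone: $\dd X_t/X_{t^-}=\tilde\mu\,\dd t+(\sigma_S-\sigma_Z)\sigma_L\,\dd W_t+\int_{\R}\big(\tfrac{1+\sigma_S x}{1+\sigma_Z x}-1\big)\,\tilde N(\dd t,\dd x)$ for a (messy but explicit) drift $\tilde\mu$, where $W$ is the Brownian part of $L$. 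Here $\sigma_L>0$ keeps that Brownian part genuinely present, $\sigma_S\neq\sigma_Z$ makes its coefficient $(\sigma_S-\sigma_Z)\sigma_L$ nonzero, and the jump kernel $x\mapsto\tfrac{1+\sigma_S x}{1+\sigma_Z x}-1$ has derivative $(\sigma_S-\sigma_Z)/(1+\sigma_Z x)^2$ of constant sign, hence is strictly monotone and injective on the admissible jump range.

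\emph{Step 2.} Recall that the self-financing property is invariant under change of numéraire: for a strategy $(\alpha_t,\beta_t)$ in $(S,Z)$ with value $V_t=\alpha_t S_t+\beta_t Z_t$, the condition $\dd V_t=\alpha_t\,\dd S_t+\beta_t\,\dd Z_t$ is equivalent to $\dd\tilde V_t=\alpha_t\,\dd X_t$ with $\tilde V_t:=V_t/Z_t$. Hence replicating $H\in L^2(\mathcal F_T)$ amounts to writing the relative claim $\tilde H:=H/Z_T$ as $\tilde H=\tilde V_0+\int_0^T\alpha_t\,\dd X_t$ for an admissible predictable $\alpha$, after which $\beta_t:=\tilde V_t-\alpha_t X_t$ recovers the second holding and the shadow short rate $\bar r(t)$ is read off from the drift of the locally riskless $(S,Z)$–combination (the one with zero $\dd L$–exposure). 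Next pick an equivalent martingale measure $\Q$ (for the numéraire $Z$) via a Girsanov/Esscher change that turns $X$ into a $\Q$-local martingale; under $\Q$ the drift $\tilde\mu$ is absorbed and $\tilde H\in L^2(\Q)$.

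\emph{Step 3 --- the crux.} One must show every $\tilde H\in L^2(\mathcal F_T,\Q)$ is a stochastic integral of $X$. First, the natural filtration of $X$ coincides with that of $L$: the continuous martingale part of $X$ recovers $W$ because $(\sigma_S-\sigma_Z)\sigma_L\neq0$, and the jumps of $X$ recover the jumps of $L$ size-by-size because the jump kernel is injective. Second, by the It\^o--L\'evy representation theorem every $\Q$-square-integrable $\mathcal F_T$-martingale equals $c+\int\xi\,\dd W+\int\!\!\int\gamma(x)\,\tilde N(\dd t,\dd x)$ for predictable $\xi,\gamma$; one then checks that, because the Brownian and compensated-jump components are welded together inside the single integrator $\dd X$ through the two fixed, invertible kernels of Step 1, this two-term representation collapses to a single integral $\int\alpha\,\dd X$ with $\alpha$ predictable. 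Equivalently, one verifies the EMM $\Q$ is unique and invokes the second fundamental theorem of asset pricing. This passage from the generic two-term L\'evy representation to a one-integrator representation is the main obstacle: it is where $\sigma_L>0$ and $\sigma_S\neq\sigma_Z$ do all the work, and where one must argue carefully about admissibility of $\alpha$ and integrability of the integrands.

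\emph{Step 4.} Square-integrability of the wealth process follows from the $L^2$ bound on $\tilde H$ together with the It\^o isometry for the $\Q$-martingale $X$, and uniqueness of $(\alpha,\beta)$ follows from uniqueness of the representation in Step 3 (equivalently, of $\Q$). Undoing the numéraire change returns a self-financing strategy in the original $(S,Z)$ market with terminal value $H$, which establishes completeness.
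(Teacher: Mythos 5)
Your Step 3 is precisely where the argument breaks down, and flagging it as ``the crux'' does not discharge it. After the numéraire change, a self-financing strategy exposes the relative wealth to risk only through the single predictable scalar $\alpha_t$: the Brownian integrand is forced to equal $\alpha_t X_{t^-}(\sigma_S-\sigma_Z)\sigma_L$ and the jump integrand is forced to equal $\gamma(t,x)=\alpha_t X_{t^-}\bigl(\tfrac{1+\sigma_S x}{1+\sigma_Z x}-1\bigr)$. The Itô--Lévy representation of a general $\tilde H\in L^2$ delivers an arbitrary pair $(\xi_t,\gamma(t,x))$, where $\gamma$ may depend on $x$ nonlinearly and is not tied to $\xi$; the ``collapse'' to a single integrator is therefore an overdetermined system, solvable for all claims only when the jump part is degenerate (no jumps, or a Lévy measure concentrated on one atom with no Gaussian part) --- which is incompatible with $\sigma_L>0$ together with a nontrivial $\nu_L$. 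Injectivity of the jump kernel and $(\sigma_S-\sigma_Z)\sigma_L\neq 0$ do show that $X$ regenerates the filtration of $L$, but measurability is not spanning: a claim such as $H=f(\Delta L_\tau)$ with $f$ nonlinear is not attainable, and for the same reason the equivalent martingale measure for the numéraire $Z$ is not unique (the Girsanov degrees of freedom acting on the jump compensator are infinite-dimensional), so the second-FTAP shortcut fails as well. Note that both assets load on the same scalar driver, so their returns are locally perfectly dependent; adding $Z$ does not enlarge the attainable set, and $\sigma_S\neq\sigma_Z$ only guarantees that the shadow-rate expression is well defined.

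In fairness, the paper's own proof is only a sketch that makes essentially the same leap: it invokes the predictable representation property of the filtration generated by $L$ --- which is a representation in terms of $W$ and the compensated jump measure jointly, not in terms of $L$ (or of $X$) as a single integrator --- and a ``full rank'' claim about the $2\times 1$ loading vector, which has rank one and cannot span the diffusion and jump directions separately. Indeed the paper's own Remark in Section 2 concedes that perfect hedging of jump risk is generally impossible in this setting. So your route (numéraire change plus one-dimensional representation) is structurally equivalent to the paper's argument and inherits the same gap; neither establishes completeness as stated without extra hypotheses, e.g.\ $\nu_L=0$, jumps of a single fixed size without a Gaussian component, or a second asset exposed to a genuinely different risk source together with an explicit choice of pricing measure.
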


\begin{proof}[Sketch]
Because $\sigma_S\neq\sigma_Z$, the $2\times1$ matrix of loadings
$[\sigma_S,\sigma_Z]^\top$ has full (row) rank, so the \emph{predictable
representation property} of the filtration generated by $L_t$ (see
\cite{JacodShiryaev2003}) implies that every $L^2$ martingale can be
represented using $L_t$.  Applying the general version of the Fundamental
Theorem of Asset Pricing in \cite{DelbaenSchachermayer1994} completes the
argument.
\end{proof}
\subsection{Risk-Neutral Dynamics and Shadow Interest Rate}

Let $\bar{r}(t)$ be the shadow riskless rate for this market -- an effective rate that a locally risk-free portfolio would earn. In the absence of an actual risk-free asset, $\bar{r}(t)$ must be determined endogenously from the dynamics of $S$ and $Z$. Following LR, we expect that no-arbitrage will enforce certain relationships between $\mu_S, \mu_Z$ and $\bar{r}$. Under a chosen risk-neutral measure $\Q$, the drift of any traded asset must equal $\bar{r}(t)$ minus any payout yield. Assuming $S$ and $Z$ pay no dividends (for simplicity), we impose that under $\Q$:
\begin{align}
\frac{dS(t)}{S(t^-)} &= \bar{r}(t)dt + \sigma_S dW^Q(t) + \int \left(e^{\kappa_S x}-1\right)\tilde{N}^Q(dt,dx), \\
\frac{dZ(t)}{Z(t^-)} &= \bar{r}(t)dt + \sigma_Z dW^Q(t) + \int \left(e^{\kappa_Z x}-1\right)\tilde{N}^Q(dt,dx).
\end{align}
Here $W^Q$ and $\tilde{N}^Q$ indicate Brownian motion and jump measure under $\Q$ (the jump measure may have a different compensator if $\Q\neq\mathbb{P}$). Essentially, under $\Q$, the expected rate of return of $S$ and $Z$ is $\bar{r}$, and $\tilde{N}^Q$ accounts for any jump-risk premia. We will find $\bar{r}(t)$ by consistency conditions between the two assets' dynamics.

\subsection{Derivation of $\bar{r}(t)$}

In the original LR PDE analysis for continuous models, Lindquist and Rachev determined $\bar{r}(t)$ by equating Sharpe ratios of the two assets under $\mathbb{P}$ and $\Q$. In our jump setting, one way to pin down $\bar{r}$ is to require that a suitable portfolio of $S$ and $Z$ evolves as a local martingale under $\Q$ (i.e. has zero drift, thus can play the role of a ``numéraire''). A natural choice is to use one of the assets (say $S$) as numéraire, or a portfolio thereof. We choose a portfolio $\Pi(t) = \alpha(t) S(t) + \beta(t) Z(t)$ such that $\Pi(t)$ is self-financing and locally risk-free. By ``locally risk-free'' we mean $d\Pi$ has no unexpected risk components under $\Q$ -- any remaining drift must equal $\bar{r}\Pi dt$ for consistency.

Following the LR approach, we can derive that:
\begin{equation}\label{eq:shadowrate_jump}
\bar r(t)=
\frac{\mu_S(t)\sigma_Z(t)-\mu_Z(t)\sigma_S(t)}
      {\sigma_Z(t)-\sigma_S(t)}
\;+\;
\frac{\lambda(t)
      \bigl(\kappa_Z(t)-\kappa_S(t)\bigr)}
     {\sigma_Z(t)-\sigma_S(t)},
\end{equation}
provided $\sigma_S(t) \neq \sigma_Z(t)$. This formula gives the shadow rate in terms of the physical drifts and volatilities of the two assets.\\
\textbf{Remark.}
Equation~\eqref{eq:shadowrate_jump} generalises the
diffusion-only formula by adding the jump-risk wedge
$\lambda(\kappa_Z-\kappa_S)$; derivation is given in
Appendix~\ref{app:verification}, Eq.~(A.35).
\subsection{The Lindquist--Rachev PIDE}

Now consider an option $C(t,S,Z)$ written on the two assets with payoff $H(S,Z)$ at maturity $T$. Using Itô--Lévy calculus, we can derive the partial integro-differential equation (PIDE) that $C$ must satisfy. The key insight is that under the risk-neutral measure $\Q$, the discounted option price must be a martingale.

Applying Itô's lemma to $C(t,S,Z)$ in the jump-diffusion setting, we get:
\begin{align}
dC &= C_t dt + C_S dS + C_Z dZ + \frac{1}{2}C_{SS} (dS)^2 + C_{SZ} dS dZ + \frac{1}{2}C_{ZZ} (dZ)^2 \nonumber \\
&\quad + \int_{\R} \left[ C(t, S e^{\kappa_S x}, Z e^{\kappa_Z x}) - C(t,S,Z) \right] N(dt,dx).
\end{align}

Under the risk-neutral measure $\Q$, substituting the dynamics and requiring that the discounted process $e^{-\int_0^t \bar{r}(u)du} C(t,S(t),Z(t))$ is a martingale, we obtain the LR-PIDE:

\begin{empheq}[box=\fbox]{align}
&C_t + \bar{r}(t)\left(S C_S + Z C_Z\right) + \frac{1}{2}\sigma_S^2 S^2 C_{SS} + \sigma_S \sigma_Z S Z C_{SZ} + \frac{1}{2}\sigma_Z^2 Z^2 C_{ZZ} \nonumber \\
&\quad + \int_{\R} \left[ C(t, S e^{\kappa_S x}, Z e^{\kappa_Z x}) - C(t,S,Z) \right] \nu(dx) - \bar{r}(t) C = 0. \label{LR-PIDE}
\end{empheq}

This is the central equation governing $C(t,S,Z)$. It generalizes the LR-PDE obtained in the diffusion-only case to include the integral term for jumps. A few remarks:

\begin{itemize}
\item The term $\bar{r}(t)[S C_S + Z C_Z]$ plays the role of the ``drift'' operator on the underlying assets. In the classical risk-neutral PDE (with one underlying), one has $r S \partial_S C$. Here $\bar{r}(t)$ multiplies a portfolio derivative $S C_S + Z C_Z$. In fact, $S C_S + Z C_Z$ is the derivative of $C$ in the direction of simultaneous scaling of $S$ and $Z$ -- it appears naturally because both assets effectively earn $\bar{r}$. In LR's original two-asset PDE, this term was $\bar{r}(t)C_d$ with $C_d \equiv S C_S + Z C_Z$.

\item The $-\bar{r}(t) C$ on the left side ensures the homogeneous form of the equation. We could move it to the right to rewrite the PIDE as $C_t + \bar{r}(t)(S C_S + Z C_Z - C) + \ldots + \text{jump term} = 0$. This highlights the analogy to Black--Scholes: there, one has $C_t + r S C_S + \frac{1}{2}\sigma^2 S^2 C_{SS} - rC=0$. Here $r$ is replaced by time-varying $\bar{r}(t)$, and we have extra terms for $Z$ and the jump integral.

\item If we had only one risky asset $S$ (and a riskless $r$), the analogous jump-diffusion pricing PIDE is: $V_t + r S V_S + \frac{1}{2}\sigma^2 S^2 V_{SS} + \int [V(S e^x) - V(S)]\nu(dx) - r V=0$ (\cite{ContTankov2004}). Our LR-PIDE \eqref{LR-PIDE} is the two-asset, no-$r$ counterpart. In fact, if one formally set $Z$ as numéraire or $Z$ as redundant, one could reduce \eqref{LR-PIDE} to a one-dimensional PIDE in terms of $S$ relative to $Z$ or vice versa.

\item The shadow rate $\bar{r}(t)$ itself can be solved from the relation \eqref{eq:shadowrate_jump} which presumably still holds approximately under $\Q$. If $\sigma_S,\sigma_Z$ are constant, $\bar{r}$ is simply constant as well (assuming $\mu_S,\mu_Z$ constant under $\mathbb{P}$). In a more general setting, $\bar{r}(t)$ could be plugged in as a given function of $t$. In practice, one might calibrate $\bar{r}(t)$ by ensuring the model yields correct forward prices for some benchmark asset or index (in our empirical work, we will treat $\bar{r}(t)$ akin to the risk-free rate input, derived from yield curves).
\end{itemize}

Equation \eqref{LR-PIDE} must be solved with appropriate terminal condition: $C(T,S,Z) = H(S,Z)$, the option payoff at maturity. It also requires boundary conditions as $S,Z \to 0$ or $\infty$ (usually one assumes $C(t,0,Z)=0$ etc., and some growth condition as $S,Z \to \infty$ for well-posedness).
\paragraph{Notation.}
Throughout, we distinguish
\begin{itemize}
  \item the \emph{shadow risk-free rate} derived endogenously from the two risky assets,
        denoted $\bar r(t)$;
  \item an exogenous benchmark rate (e.g.\ 3-month T-bill) used only for comparison,
        denoted $r_B(t)$.
\end{itemize}
Unless explicitly stated, all discounting and risk-neutral drifts henceforth use
$\bar r(t)$.  When $r_B(t)$ appears, we mark it clearly as a benchmark check.
\section{Feynman--Kac Representation and Characteristic Function Methods}

The LR-PIDE we derived is a linear integro-differential equation. One powerful approach to solve such equations is to use the Feynman--Kac formula, which represents the solution as an expectation under the risk-neutral measure. Intuitively, we expect:
\begin{equation}\label{FK}
C(t,S,Z) = \E^{\Q}\left[e^{-\int_t^T \bar{r}(u)du} H(S(T), Z(T)) \;\Big|\; S(t)=S, Z(t)=Z\right].
\end{equation}
This is the natural generalization of the risk-neutral pricing formula, using the shadow short rate $\bar{r}(u)$ for discounting. In the special case where $\bar{r}$ is constant, this simplifies to $e^{-\bar{r}(T-t)} \E^Q[H(S(T),Z(T))|S(t)=S,Z(t)=Z]$. One can verify that \eqref{FK} indeed satisfies the PIDE \eqref{LR-PIDE} by differentiating under the expectation (this essentially comes from applying the generator of the $(S,Z)$ process to the payoff inside the expectation, which reproduces the left-hand side of \eqref{LR-PIDE}, as long as $H$ grows moderately so that the integrals converge). This is analogous to how the classical Feynman--Kac theorem yields the solution to Black--Scholes PDE.

The representation \eqref{FK} is highly convenient for computation, especially by Monte Carlo simulation. However, for analytic or semi-analytic pricing, we can leverage the characteristic function of the underlying processes. Because $S$ and $Z$ have affine jump-diffusion dynamics (exponential Lévy processes), the joint distribution of $(S(T),Z(T))$ can be described via the distribution of the common factors driving them. In particular, note that if we take one asset as numéraire or focus on the portfolio that the option is written on, we may reduce the dimensionality.

\subsection{Choice of Underlying for Pricing}

In many cases, the option payoff $H(S,Z)$ might depend on a specific combination of $S$ and $Z$. LR's original approach priced an option whose underlying was a portfolio $\eta S + (1-\eta)Z$. This was done so that the option's underlying is itself spanning the two assets, ensuring completeness. Let's consider two important scenarios:

\begin{itemize}
\item \textbf{Case 1:} The claim is on a single asset, say $H = (S(T) - K)_+$ (a call on $S$). In this case, since $Z$ is another traded asset, one can think of $Z$ as a secondary instrument used for hedging but not directly in payoff. One could set up $Z$-hedging but ultimately the price will be a function $C(t,S,Z)$ where $Z$ acts as another state variable. However, due to homogeneity, $C$ might actually only depend on $S$ and the ratio $Z/S$ or something similar (if $\bar{r}$ is constant and $\kappa_S=\kappa_Z$ for simplicity, then symmetry might reduce it). In general though, it's genuinely two-dimensional.

\item \textbf{Case 2:} The claim is on a portfolio or spread. For example, $H = (\eta S(T) + (1-\eta)Z(T) - K)_+$, or perhaps an exchange option $H=\max(S(T)-Z(T),0)$. In these cases, both $S$ and $Z$ enter nonlinearly. No closed form is expected in general, but some symmetry in the driving factors can help. A particularly symmetric case is $\eta=\frac{1}{2}, \kappa_S=\kappa_Z$, meaning the two assets are statistically similar -- then $\eta S + (1-\eta)Z$ essentially scales by $e^x$ on jumps, making it itself an exponential Lévy (which might allow reduction to one dimension by treating that portfolio as a single underlying).

\item \textbf{Case 3:} The claim payoff is independent of one of the assets (e.g. $H(S(T))$ only). In that case, the pricing problem can be reduced: since $Z$ is only instrumental for hedging, one might choose to use $Z$ as the numéraire (or $S$ as numéraire) to simplify the expectation.
\end{itemize}

\subsection{Characteristic Functions}

The hallmark of Lévy processes is that the characteristic function (CF) of their distribution is known in closed form. The log-price processes for $S$ and $Z$ under $\Q$ can be written as:
\begin{align}
\ln S(T) &= \ln S(t) + \left(\bar{r} - \frac{1}{2}\sigma_S^2 - \Lambda_S^Q\right)(T-t) + \sigma_S [W^Q(T)-W^Q(t)] + \text{(jump part)}, \\
\ln Z(T) &= \ln Z(t) + \left(\bar{r} - \frac{1}{2}\sigma_Z^2 - \Lambda_Z^Q\right)(T-t) + \sigma_Z [W^Q(T)-W^Q(t)] + \text{(jump part)}.
\end{align}
The jump part for each is $\sum_{\text{jumps }i} \kappa_S X_i$ for $\ln S$ and $\sum \kappa_Z X_i$ for $\ln Z$, where $X_i$ are i.i.d. with law $\nu(dx)$ and the number of jumps is Poisson($\nu$ measure integrated). In fact, $\ln S(T)$ and $\ln Z(T)$ are jointly normally distributed conditional on the jump part, and the jumps add an independent component. Because $W^Q$ is common, the continuous parts of $\ln S$ and $\ln Z$ have correlation 1 (in our assumption of one Brownian). The jump parts are perfectly correlated in jump times but can differ in magnitude if $\kappa_S\neq \kappa_Z$. 

However, one can treat $\ln S$ and $\ln Z$ as linear combinations of two independent Lévy processes: one is $W^Q$ (diffusion part) and one is $L(t)$ (pure jump part). That is:
\begin{align}
X_1(t) &:= W^Q(t), \\
X_2(t) &:= L(t),
\end{align}
with appropriate scaling. Then we have:
\begin{align}
\ln S(T) - \ln S(t) &= a_{1,S} [X_1(T)-X_1(t)] + a_{2,S}[X_2(T)-X_2(t)] + (\text{drift adj}), \\
\ln Z(T) - \ln Z(t) &= a_{1,Z} [X_1(T)-X_1(t)] + a_{2,Z}[X_2(T)-X_2(t)] + (\text{drift adj}),
\end{align}
where $a_{1,S} = \sigma_S$, $a_{1,Z} = \sigma_Z$, $a_{2,S} = \kappa_S$, $a_{2,Z} = \kappa_Z$. The drift adjustment terms ensure the expectation is $\bar{r}-\ldots$ etc. Now the joint characteristic function of $(\ln S(T), \ln Z(T))$ given $(S(t),Z(t))$ is:
\begin{equation}
\Phi(u_1,u_2) := \E^Q\left[\exp\{i [u_1 \ln S(T) + u_2 \ln Z(T)]\}\,\Big|\,\mathcal{F}_t\right].
\end{equation}

Using the independent factors $X_1, X_2$, the CF factors into:
\begin{equation}
\Phi(u_1,u_2) = \exp\left\{ i(u_1 \ln S(t) + u_2 \ln Z(t)) + (T-t)\Psi(u_1,u_2) \right\},
\end{equation}
where $\Psi(u_1,u_2)$ is the joint cumulant generating function (CGF) per unit time:
\begin{align}
\Psi(u_1,u_2) &= i u_1(\bar{r}-\delta_S) + i u_2(\bar{r}-\delta_Z) - \frac{1}{2}(u_1^2 \sigma_S^2 + 2 \rho u_1 u_2 \sigma_S \sigma_Z + u_2^2 \sigma_Z^2) \nonumber \\
&\quad + \int_{\R} \left(e^{i(u_1 \kappa_S + u_2 \kappa_Z)x} - 1 - i(u_1 \kappa_S + u_2 \kappa_Z)x\right)\nu(dx).
\end{align}
Here we allowed for the possibility of continuous dividend yields $\delta_S, \delta_Z$ (or convenience yields) which would enter the drift adjustments (so effectively replace $\bar{r}$ by $\bar{r}-\delta$ in drift of each asset). We also allowed a correlation $\rho$ between $dW$ parts of $\ln S$ and $\ln Z$, which in our baseline $\rho=1$ case simplifies the expression (the cross term becomes $u_1 u_2 \sigma_S \sigma_Z$). The integral term is the Lévy--Khintchine formula: for a Lévy jump component, the CF exponent is $\int(e^{iu y}-1-iu y\mathbf{1}_{|y|<1})\nu(dy)$ for each independent jump component. Here the jump component enters as $u_1 \kappa_S + u_2 \kappa_Z$ times $x$ in the exponent, meaning effectively the jump part contributes:
\begin{equation}
\Psi_{\text{jump}}(u_1,u_2) = \int_{\R} \left(e^{i(u_1 \kappa_S + u_2 \kappa_Z)x} - 1 - i(u_1 \kappa_S + u_2 \kappa_Z)x\mathbf{1}_{|x|<1}\right)\nu(dx).
\end{equation}
(The $-iux$ small-$x$ truncation is optional if one uses it for Lévy processes with infinite small jumps to ensure convergence; for simplicity, we assume $\nu$ is such that $\int |x| \nu(dx)<\infty$ or we implicitly include that term in drift.)

Once $\Phi(u_1,u_2)$ is known, we can price options by inverting the characteristic function. For European payoff $H(S(T),Z(T))$, we have from \eqref{FK}:
\begin{equation}
C(t,S,Z) = e^{-\int_t^T \bar{r}(u)du} \E^Q[H(S(T),Z(T))|S(t)=S,Z(t)=Z].
\end{equation}
If we denote $x = \ln(S(T))$, $y = \ln(Z(T))$, and similarly $x_0 = \ln S(t)$, $y_0 = \ln Z(t)$, then we need $\E[H(e^x,e^y)]$. Many option payoffs -- especially calls/puts -- can be expressed or expanded in forms convenient for Fourier inversion. For instance, a European call on $S$ with strike $K$ has payoff $H = \max(S(T)-K,0)$. Its price can be obtained by well-known Fourier formulas involving the characteristic function of $\ln S(T)$. In our two-asset case, if the payoff depends only on $S(T)$, then effectively:
\begin{equation}
C(t,S,Z) = e^{-\int_t^T \bar{r}du} \E[ (S(T)-K)_+ | S(t)=S],
\end{equation}
which is the same as in a one-dimensional model for $S$ alone (since $Z$ does not appear in payoff and $Z(t)$ just helps determine measure $\Q$ but under our assumptions $Z$ does not change $S$'s marginal). Thus, for payoffs on a single asset, the pricing formula reduces to the usual one-dimensional Fourier inversion:
\begin{equation}\label{CarrMadanFormula}
C(t,S) = S P_1 - K e^{-\int_t^T \bar{r}(u)du} P_2,
\end{equation}
where
\begin{align}
P_2 &= \Q(S(T)>K \mid S(t)=S), \\
P_1 &= \Q(S(T)>K \text{ under forward measure}) = \E^Q\left[\mathbf{1}_{\{S(T)>K\}} \frac{S(T)}{S(t)} e^{-\int_t^T \bar{r}(u)du} \mid S(t)=S\right]
\end{align}
(standard results for calls). Both $P_1$ and $P_2$ can be computed by inverting the CF of $\ln S(T)$. In fact, one can derive:
\begin{align}
P_2 &= \frac{1}{2} + \frac{1}{\pi}\int_0^\infty \Re\left\{\frac{e^{-i u \ln K}\phi_{\ln S}(u)}{i u}\right\} du, \\
P_1 &= \frac{1}{2} + \frac{1}{\pi}\int_0^\infty \Re\left\{\frac{e^{-i u \ln K}\phi_{\ln S}(u - i)}{i u}\right\} du,
\end{align}
where $\phi_{\ln S}(u) = \E^Q[e^{i u \ln S(T)}]/e^{i u \ln S(t)}$ is the characteristic function of the log-price forward increment. These integrals come from the inverse Fourier transform of the Heaviside payoff $(S-K)_+$. They were first presented by \cite{CarrMadan1999} and are akin to the well-known formulas of Heston (1993) for option prices in terms of CF. We will not re-derive them here, but they follow from writing the payoff's indicator $\mathbf{1}_{\{S>K\}}$ as an integral of complex exponentials.

For a European call on the portfolio $\eta S + (1-\eta)Z$, one could in principle reduce the problem to one dimension by considering the distribution of $U(T) := \eta S(T)+(1-\eta)Z(T)$. However, since $\eta S + (1-\eta)Z$ is not log-normally distributed (it's a sum of two correlated exponentials), it doesn't admit a simple closed form CF. Instead, one might resort to two-dimensional inversion:
\begin{equation}
C(t,S,Z) = e^{-\int_t^T \bar{r}du} \frac{1}{(2\pi)^2} \iint_{\R^2} e^{-i (u_1 x_0 + u_2 y_0)} \frac{\Phi(u_1,u_2)}{i u_1 + i u_2} e^{ -i (u_1+u_2)\ln K}du_1du_2,
\end{equation}
for a payoff $H = \max(S+Z-K,0)$, for example. This is significantly more complicated and usually not needed if one chooses to simulate or use alternate methods (e.g. COS method in 2D). In practice, one might approximate the portfolio distribution or use regression methods.

Bottom line: The Feynman--Kac expectation \eqref{FK} is our conceptual solution. In the next section, we will focus on specific Lévy models (NIG, CGMY, VG) for which the characteristic exponent (the function $\Psi(\cdot)$ above) is known, and discuss how that yields efficient pricing for European options using transform methods.

\section{Special Cases: NIG, CGMY, and Variance Gamma Processes}

We now describe the special Lévy processes mentioned and how they fit into our framework as choices for the common jump driver $L(t)$. All these processes are pure-jump (no Brownian part) in their canonical form, but we may combine them with a Brownian component if needed (the BG process -- Brownian + Gamma, etc.). In fact, for calibration to equity options, often a diffusion plus a jump is used (Merton's model). However, studies find that a pure-jump model with infinite activity (like VG or CGMY) can by itself replicate the short-term diffusion-like behavior while providing a better fit to tails (\cite{ContTankov2004}). We will present each process's characteristic exponent $\Psi_L(u)$ for the jump part and any constraints on parameters.

\subsection{Normal Inverse Gaussian (NIG)}

The NIG process is a Lévy process whose increments have a Normal-Inverse Gaussian distribution. It can be thought of as a normal mean-variance mixture where the mixing distribution is the Inverse Gaussian. \cite{BarndorffNielsen1998} introduced it in finance to model asset returns with kurtosis and skew. An NIG process $L(t)$ is specified by four parameters $(\alpha,\beta,\delta,\mu)$, but often one sets $\mu=0$ for a Lévy process (since $\mu$ would represent a deterministic drift which is usually adjusted to fit $\bar{r}$). The remaining parameters satisfy $\alpha > 0$, $|\beta| < \alpha$, $\delta > 0$. The characteristic function of $L(t)$ is:
\begin{equation}
\E[e^{iu L(t)}] = \exp\left\{ t\left[ i\mu u + \delta\left(\sqrt{\alpha^2-\beta^2} - \sqrt{\alpha^2 - (\beta + iu)^2}\right) \right] \right\}.
\end{equation}

The characteristic exponent (cumulant generating function) is:
\begin{equation}
\Psi_{\text{NIG}}(u) = i\mu u + \delta\left(\sqrt{\alpha^2-\beta^2} - \sqrt{\alpha^2 - (\beta + iu)^2}\right).
\end{equation}

The parameters have the following interpretations:
\begin{itemize}
\item $\alpha$ controls the tail heaviness (larger $\alpha$ means lighter tails)
\item $\beta$ controls the asymmetry/skewness (positive $\beta$ gives positive skew, negative $\beta$ gives negative skew)
\item $\delta$ is a scale parameter (larger $\delta$ increases the variance)
\item $\mu$ is a location parameter (drift)
\end{itemize}

For our two-asset model, if $L(t)$ follows NIG$(\alpha,\beta,\delta,0)$, then the jump parts of $\ln S$ and $\ln Z$ are $\kappa_S L(t)$ and $\kappa_Z L(t)$ respectively. The NIG distribution has semi-heavy tails (exponentially decaying, but slower than Gaussian) and can exhibit moderate skewness. It's particularly suitable for modeling equity returns which show mild negative skew and excess kurtosis.

\subsection{CGMY Process}

The CGMY process, introduced by \cite{CarrGemanMadanYor2002}, is a tempered stable Lévy process with four parameters $(C, G, M, Y)$. The Lévy measure is:
\begin{equation}
\nu(dx) = C \begin{cases}
\frac{e^{-G|x|}}{|x|^{1+Y}} & \text{if } x < 0 \\
\frac{e^{-Mx}}{x^{1+Y}} & \text{if } x > 0
\end{cases}
\end{equation}
where $C > 0$, $G \geq 0$, $M \geq 0$, and $Y < 2$. The parameters have the following roles:
\begin{itemize}
\item $C$ controls the overall level of jump activity
\item $G$ controls the rate of exponential decay for negative jumps
\item $M$ controls the rate of exponential decay for positive jumps  
\item $Y$ controls the fine structure of jumps near zero (higher $Y$ means more small jumps)
\end{itemize}

The characteristic exponent for CGMY is:
\begin{equation}
\Psi_{\text{CGMY}}(u) = C \Gamma(-Y) \left[ (M - iu)^Y - M^Y + (G + iu)^Y - G^Y \right],
\end{equation}
where $\Gamma(\cdot)$ is the gamma function.

Several well-known processes are special cases:
\begin{itemize}
\item \textbf{Variance Gamma (VG):} This is obtained in the limit $Y \to 0$. Indeed, $\lim_{Y\to 0} (M - i u)^Y - M^Y \approx \ln((M-iu)/M) \cdot Y$ etc., leading to $\Psi(u) = C \cdot[-\ln(1+u^2) + \ldots]$; more concretely, VG can be parameterized by $(\sigma, \nu, \theta)$ where $\nu$ is the variance of the Gamma subordinator. The VG characteristic exponent is $\frac{1}{\nu}\ln(1 - i\theta\nu u + \frac{1}{2}\sigma^2 \nu u^2)$ (\cite{MadanCarrChang1998}). It corresponds to CGMY with $Y=0$ (pure jump finite variation with infinite activity).

\item \textbf{Classical finite jump models:} If $Y<0$, the density $\nu(dx)$ becomes integrable at 0 (finite activity -- only finite jumps in any interval). E.g. for $Y=-1$, one can get something like a double exponential (Kou) model in a limit. But usually $Y$ is taken positive in CGMY.

\item If $Y=1$, the tails behave like $1/|x|^2$, which is borderline for variance (log-stable).

\item If $Y=2$, the variance diverges (like Lévy stable with index 2) -- but $Y=2$ is usually not allowed in CGMY (the Gamma function $\Gamma(-Y)$ blows up at 0 or negative integers).

\item If $C$ is small and $Y$ small, the model approaches a Merton jump diffusion (effectively only a few jumps with quasi-Poisson frequency and finite variance). On the other hand, large $C$ and $Y$ near 1 gives many small jumps (like a jump-diffusion limit with infinite jumps approximating diffusion).
\end{itemize}

For CGMY, one often fixes $C$ as scale, $Y$ as tail index to fit kurtosis, and $G, M$ to fit asymmetry (skew). In risk-neutral calibration, one of these parameters might be linked to $\bar{r}$ and dividend (ensuring no drift). Typically, $\theta = C(\Gamma(1-Y)(M^{Y-1} - G^{Y-1}))$ will appear in drift to ensure $\E[dS]=\bar{r} S dt$.

In our two-asset setup, using a common CGMY driver $L(t)$ for both assets is straightforward: $L(t)$ with parameters $(C,G,M,Y)$, and $S$ uses $\kappa_S L(t)$, $Z$ uses $\kappa_Z L(t)$. A scaled CGMY (by $\kappa$) has parameters $(C \kappa^{-Y}, G/\kappa, M/\kappa, Y)$. So again, one asset may effectively see a different $G,M$ if $\kappa\neq 1$. Typically, though, one might assume $\kappa_S=\kappa_Z$ if one expects jumps to affect both assets proportionally (e.g. a market index jump). In that case, the two assets differ only by diffusive volatility.

\subsection{Variance Gamma (VG)}

The VG process (\cite{MadanSeneta1990},\cite{MadanCarrChang1998}) is a pure-jump Lévy process obtained by subordinating Brownian motion to a Gamma process. It can be seen as a special case of CGMY as noted (with $Y=0$ formally). It has parameters often denoted $(\sigma, \nu, \theta)$:
\begin{itemize}
\item $\nu > 0$ is the variance of the Gamma subordinator (which controls jump frequency -- smaller $\nu$ means more frequent small jumps, as time is sped up).
\item $\sigma > 0$ is the volatility of the Brownian that is being time-changed (it sets the scale of jumps).
\item $\theta$ (real) is the drift of that Brownian (which sets asymmetry/skew of jumps).
\end{itemize}

The characteristic function of a VG increment over $T-t$ is:
\begin{equation}
\E[e^{i u (L(T)-L(t))}] = \left(1 - i \theta \nu u + \frac{1}{2}\sigma^2 \nu u^2 \right)^{-\frac{T-t}{\nu}}.
\end{equation}
This is obtained from the moment generating function of a Gamma distribution. Expanding the exponent:
\begin{equation}
\Psi_{\text{VG}}(u) = -\frac{1}{\nu} \ln\left(1 - i\theta \nu u + \frac{1}{2}\sigma^2 \nu u^2\right).
\end{equation}

For small $\nu$, using $\ln(1+x)\approx x$, we get $\Psi(u) \approx i \theta u - \frac{1}{2}\sigma^2 u^2$, which tends to a Brownian with drift $\theta$ and variance $\sigma^2$ as $\nu \to 0$ (makes sense: no subordination means just Brownian). For nonzero $\nu$, the process has infinite activity (infinitely many tiny jumps in any interval) but finite variation (for $\theta=0$, it's symmetric and essentially a difference of two Gammas, all jumps finite).

VG can fit moderate skew and kurtosis but might have trouble with very sharp spikes in implied vol for very short maturities (some prefer CGMY for that, as $Y$ adds an extra degree of freedom for jump frequency at short scales). Nevertheless, VG is popular for equity derivatives. When calibrating VG, typically $\theta$ picks up skew (if negative $\theta$ yields more negative jumps), $\nu$ and $\sigma$ together control kurtosis.

\subsection{Relationship among the models}

CGMY is in a sense the most flexible (4 parameters) and includes VG (3 params) as a limit. NIG is also 4 params but a different functional form; it has semi-heavy tails like CGMY, but CGMY can produce power-law jump behavior at origin whereas NIG's small-jump behavior is like $|x|^{-1}$ times a Bessel K which is also singular but perhaps slightly different class. In practice, all can fit typical index option smiles quite well. Some empirical findings (e.g. Rachev et al 2017, or others) suggest that adding jumps significantly reduces pricing errors relative to Black--Scholes (\cite{RachevStoyanovFabozzi2017}). Our empirical section will compare NIG vs CGMY on SPX options.

\section{Numerical Solution Methods for the LR-PIDE}

Analytical pricing formulas for European options in Lévy models are typically available in transform form as discussed. However, to actually compute prices and calibrate to data, we rely on numerical methods. Two efficient Fourier-based techniques are widely used: the Carr--Madan FFT method and the COS method. We briefly describe how each applies to our setting.

\subsection{FFT Option Pricing (Carr--Madan Method)}

\cite{CarrMadan1999} showed that the option pricing problem can be solved rapidly by using the Fast Fourier Transform (FFT) on the characteristic function. The idea is to consider the Fourier transform of the option's payoff or price with respect to strike (or log-strike). By damping the payoff to make it $L^1$, one can ensure the transform exists. Then the option price as a function of strike can be recovered by an inverse FFT.

Concisely:
\begin{itemize}
\item We define $c(k) = C(t,S, e^k)$ as the time-$t$ call price as a function of log-strike $k = \ln K$. For simplicity assume $t=0$.
\item We consider a damped call price $c^*(k) = e^{\alpha k} c(k)$ for some damping factor $\alpha>0$. For large strike, $c(k)$ decays to 0, but slowly; multiplying by $e^{\alpha k}$ with $\alpha>0$ (typically $\alpha > 0$ for OTM calls) ensures $c^*(k)$ is square-integrable.
\item We then take the Fourier transform:
\begin{equation}
\varphi(u) = \int_{-\infty}^{\infty} e^{i u k} c^*(k)dk.
\end{equation}
It turns out $\varphi(u)$ can be expressed directly in terms of the characteristic function of $\ln S(T)$, which we assumed known. Carr \& Madan derived:
\begin{equation}
\varphi(u) = \frac{e^{-\bar{r}T}}{\alpha + i u} \phi_{\ln S}(u - i(\alpha+1)),
\end{equation}
where $\phi_{\ln S}$ is the characteristic function of $\ln S(T)$ starting from $\ln S(0)=0$ (for pricing, one often factors out the $S(0)$ and discount). The details aside, this provides $\varphi(u)$ explicitly. Then one obtains $c^*(k)$ by inverse transform:
\begin{equation}
c^*(k) = \frac{1}{2\pi} \int_{-\infty}^{\infty} e^{-i u k} \varphi(u)du.
\end{equation}
In practice, one discretizes $k$ on some grid (covering relevant strikes) and uses FFT to evaluate this integral efficiently for all $k$ values simultaneously.
\end{itemize}

By choosing a proper range and step for $u$ (the Fourier variable) and similarly for $k$ (which will be determined by the sampling theorem from $u$-range), one can compute, say, several hundred option prices in one FFT of complexity $O(N \log N)$. This is extremely useful when calibrating to an entire surface: one can generate prices for a whole range of strikes for a given maturity almost instantly, then compare to market quotes.

One must carefully pick $\alpha$ (the damping) because it influences convergence and error. $\alpha$ must be greater than the asymptotic decay rate of call price (commonly $\alpha \approx 1$ or $1.5$ works for equity options). Additionally, to compute $P_1$ and $P_2$ probabilities for formula \eqref{CarrMadanFormula}, one can either integrate as given or also use FFT by noting that $P_2 = e^{k}$ times put price transform etc.

For multi-asset or multi-dimensional problems, FFT methods become more complicated (due to multiple integration variables). However, our case for a single asset or the index is straightforward. In our implementation code (Appendix), we will utilize an FFT approach to get model prices for an array of strikes given a characteristic function.

\subsection{COS Method (Fourier-Cosine Expansion)}

The COS method, proposed by \cite{FangOosterlee2009}, is a highly efficient Fourier series technique for option pricing. It works by expanding the payoff function in a cosine series on a truncated domain and using the characteristic function to analytically calculate the cosine coefficients of the option price. Key points:

\begin{itemize}
\item Assume we truncate the support of the (log-)asset price distribution to $[a,b]$ (chosen such that the probability mass outside is negligible, e.g. $[a,b] = [m_L - L\sqrt{v_L}, m_L + L\sqrt{v_L}]$ for some $L$ multiples of stdev around the mean, to capture 99.9\% mass).

\item The payoff $H(S(T))$ (for simplicity 1D, extension to two-dim possible but we illustrate 1D) is transformed to a function in $x = \ln S(T)$, say $g(x) = H(e^x)$. On $[a,b]$, $g(x)$ can be expanded in cosine series:
\begin{equation}
g(x) \approx \sum_{n=0}^{N-1} A_n \cos\left(\frac{n\pi (x-a)}{b-a}\right).
\end{equation}
The coefficients $A_n$ are given by (due to orthogonality of cosines):
\begin{equation}
A_n = \frac{2-\mathbf{1}_{\{n=0\}}}{b-a} \int_a^b g(x)\cos\left(\frac{n\pi(x-a)}{b-a}\right) dx.
\end{equation}

\item The option price at $t$ is $C(0,S_0) = e^{-\bar{r}T} \E[g(X(T))]$ with $X(T)=\ln S(T)$. We can swap expectation and summation:
\begin{equation}
C(0,S_0) \approx e^{-\bar{r}T} \sum_{n=0}^{N-1} A_n \E\left[\cos\left(\frac{n\pi(X(T)-a)}{b-a}\right)\right].
\end{equation}
But $\E[\cos(\frac{n\pi(X(T)-a)}{b-a})] = \Re{\phi_X(\frac{n\pi}{b-a}) e^{-i \frac{n\pi a}{b-a}}}$, which is directly obtainable from the characteristic function of $X(T)$.

\item Thus, all terms in the sum can be computed explicitly: $A_n$ perhaps analytically if $g$ is simple (for a call payoff, $g(x) = (e^x-K)_+$, the integral for $A_n$ can be done in closed form), and the expectation term via the CF. So we get a rapidly convergent series for $C$.
\end{itemize}

The COS method often shows exponential convergence in $N$ for smooth payoffs, and still very fast for a vanilla call (which is one-time differentiable at strike). It requires evaluating the CF at $N$ points, but $N$ can be as low as a few hundred for high accuracy, making it extremely fast.

One advantage of COS is that it can be applied to more exotic options by expanding their payoff -- it is not limited to vanilla calls. For calibration, one can use COS to price options for given parameters quickly as well.

In our experiments, we might use COS as a cross-check to FFT. Both are based on CF and should agree to numerical tolerance.

\subsection{Choosing a Method}

In calibration, if we need to price thousands of options repeatedly, speed is essential. The FFT method is straightforward to implement and parallelizes nicely (each maturity independent). The COS method can be more efficient for individual options or moderate batches. Since we will be calibrating to SPX options (potentially dozens of strikes across multiple maturities), we might use the Carr--Madan FFT to generate an entire implied vol curve per parameter guess.

We will demonstrate the FFT approach in code, but note that COS would be similarly simple to code given a CF.

\section{Computational Methods}

In this section, we outline the numerical and computational techniques utilized within the Lindquist--Rachev (LR) framework for option pricing with Lévy jumps. These methods encompass a detailed calibration process to fit model parameters to market data and efficient Fourier-based approaches for pricing options, ensuring both accuracy and computational feasibility.

The calibration process is essential for aligning the LR model with observed market dynamics, integrating historical volatility estimation with iterative updates to the shadow riskless rate. This ensures that the model accurately captures asset price behaviors and option pricing errors are minimized.

The workflow is depicted in Figure~\ref{fig:calibration-flowchart}, illustrating the iterative steps involved.

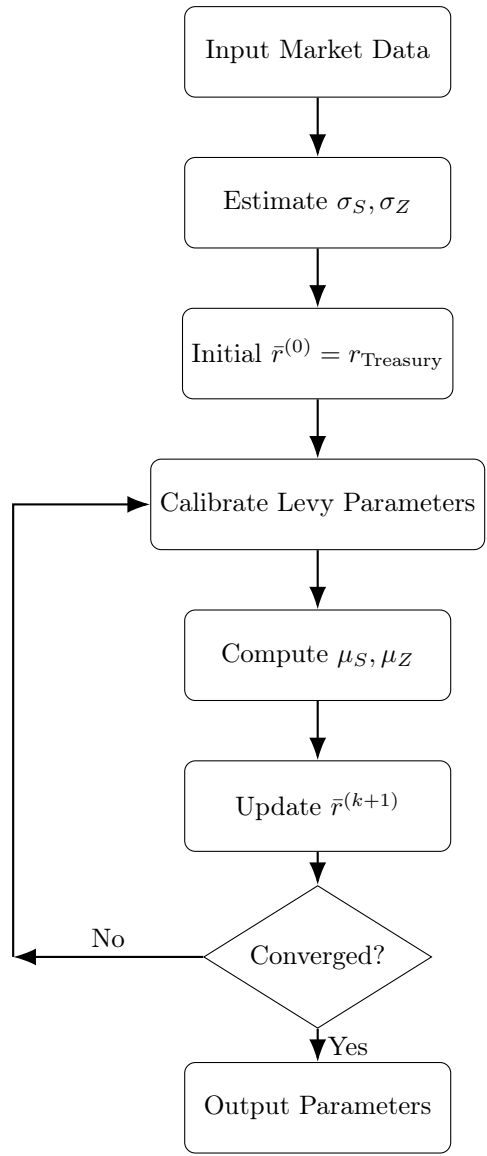
\begin{figure}[H]
\centering
\begin{tikzpicture}[
    node distance=2cm,
    stage/.style={rectangle, draw, rounded corners, minimum width=3.5cm, minimum height=1.2cm, align=center},
    decision/.style={diamond, draw, aspect=1.5, minimum width=3cm, minimum height=1.5cm, align=center},
    arrow/.style={-{Latex[length=3mm]}, thick}
]

% Nodes
\node (start) [stage] {Input Market Data};
\node (hist) [stage, below of=start] {Estimate $\sigma_S,\sigma_Z$};
\node (init) [stage, below of=hist] {Initial $\bar{r}^{(0)} = r_{\text{Treasury}}$};
\node (calibrate) [stage, below of=init] {Calibrate Levy Parameters};
\node (drift) [stage, below of=calibrate] {Compute $\mu_S,\mu_Z$};
\node (shadow) [stage, below of=drift] {Update $\bar{r}^{(k+1)}$};
\node (decide) [decision, below of=shadow] {Converged?};
\node (end) [stage, below of=decide] {Output Parameters};

% Forward arrows
\draw [arrow] (start) -- (hist);
\draw [arrow] (hist) -- (init);
\draw [arrow] (init) -- (calibrate);
\draw [arrow] (calibrate) -- (drift);
\draw [arrow] (drift) -- (shadow);
\draw [arrow] (shadow) -- (decide);
\draw [arrow] (decide.south) -- node[right] {Yes} (end);

% "No" loop back to Calibrate block
\coordinate[left=2.5cm of decide] (loopstart);
\draw [arrow] (decide.west) -- (loopstart) node[midway, above] {No};
\draw [arrow] (loopstart) |- (calibrate.west);

\end{tikzpicture}
\caption{Flowchart of the calibration process}
\label{fig:calibration-flowchart}
\end{figure}

The calibration proceeds as follows:

\begin{enumerate}
    \item \textbf{Data Preparation:} Gather market data, including option prices $C_{\text{mkt}}(K_j)$ for strikes $\{K_j\}_{j=1}^N$, underlying asset prices $S$ and $Z$, and the Treasury rate $r_{\text{Treasury}}$ as the initial shadow rate $\bar{r}^{(0)}$.

    \item \textbf{Historical Volatility Estimation:} Calculate historical volatilities from log-returns:
    \begin{equation}
        \sigma_S = \text{std}\left(\ln\frac{S_{t+1}}{S_t}\right) \times \sqrt{252},
    \end{equation}
    with a similar computation for $\sigma_Z$.

    \item \textbf{Initialize Shadow Rate:} Set $\bar{r}^{(0)} = r_{\text{Treasury}}$.

    \item \textbf{Calibrate Lévy Parameters:} Minimize the root mean squared error (RMSE):
    \begin{equation}
        \min_{\Theta} \sqrt{\frac{1}{N}\sum_{j=1}^N \left(C_{\text{model}}(K_j; \Theta, \bar{r}^{(k)}) - C_{\text{mkt}}(K_j)\right)^2},
    \end{equation}
    where $\Theta$ denotes the Lévy model parameters.

    \item \textbf{Compute Risk-Neutral Drifts:} For asset $S$, the drift is:
    \begin{equation}
        \mu_S = \bar{r}^{(k)} - \delta_S + \frac{1}{2}\sigma_S^2 + \Lambda_S^{\mathbb{Q}},
    \end{equation}
    with the CGMY jump compensator:
    \begin{equation}
        \Lambda_S^{\mathbb{Q}} = C\Gamma(-Y)\left[(M-\kappa_S)^Y - M^Y + (G+\kappa_S)^Y - G^Y\right].
    \end{equation}

    \item \textbf{Update Shadow Rate:} Adjust the shadow rate:
    \begin{equation}
        \bar{r}^{(k+1)} = \frac{\mu_S\sigma_Z - \mu_Z\sigma_S}{\sigma_Z - \sigma_S}.
    \end{equation}

    \item \textbf{Check Convergence:} Continue iterations until:
    \begin{equation}
        |\bar{r}^{(k+1)} - \bar{r}^{(k)}| < \epsilon \quad (\epsilon = 10^{-4}).
    \end{equation}
\end{enumerate}

This process is encapsulated in Algorithm~\ref{alg:calibration}.

\begin{algorithm}[ht]
\caption{LR Framework Calibration}
\label{alg:calibration}
\begin{algorithmic}[1]
\Require Market data $C_{\text{mkt}}(K_j)$, $S$, $Z$, $r_{\text{Treasury}}$
\Ensure Calibrated $\Theta^*$, $\bar{r}^*$
\State Estimate $\sigma_S,\sigma_Z$ from historical returns
\State Initialize $\bar{r}^{(0)} \gets r_{\text{Treasury}}$
\Repeat
    \State Calibrate $\Theta^{(k)}$ to minimize RMSE
    \State Compute $\mu_S^{(k)}, \mu_Z^{(k)}$ using risk-neutral drift formulas
    \State Update $\bar{r}^{(k+1)}$ using the shadow rate formula
\Until{$|\bar{r}^{(k+1)} - \bar{r}^{(k)}| < \epsilon$}
\Return $\Theta^*$, $\bar{r}^*$
\end{algorithmic}
\end{algorithm}
\section{Discrete-Time Implementation: Jump-Binomial Tree}\label{sec:binomial}
To validate the continuous-time model, we also consider a discrete-time lattice approach incorporating jumps. Let the timeline be divided into $n$ steps of length $\Delta t = T/n$. Over each interval $[t_k, t_{k+1}]$, the two asset prices move according to:
\begin{equation}
\begin{aligned}
S_{k+1}^{(u)} &= (1 + U_{k+1})\,S_k, &\qquad S_{k+1}^{(d)} &= (1 + D_{k+1})\,S_k,\\
Z_{k+1}^{(u)} &= (1 + \tilde{U}_{k+1})\,Z_k, &\qquad Z_{k+1}^{(d)} &= (1 + \tilde{D}_{k+1})\,Z_k~,
\end{aligned}
\end{equation}
with $U_{k+1} > D_{k+1}$ and $\tilde{U}_{k+1} > \tilde{D}_{k+1}$ representing the upward and downward percentage price changes for $S$ and $Z$, respectively, in step $k+1$. Because there is no true riskless asset, we define a shadow one-period growth factor $R_{k+1}$ analogously to \cite{LindquistRachev2025}:
\begin{equation}
R_{k+1} \;=\; (1+U_{k+1})(1+\tilde{D}_{k+1}) \;-\; (1+\tilde{U}_{k+1})(1+D_{k+1})~,
\end{equation}
which can be interpreted as $1$ plus the shadow risk-free rate over $[t_k,t_{k+1}]$. Under the risk-neutral measure, the discounted option price must evolve as a martingale. The risk-neutral probability $q_{k+1}$ of an up-jump in this lattice is therefore chosen such that the expected growth of the replicating $S$–$Z$ portfolio equals $R_{k+1}$. This yields (cf. the continuous-time analogues in \cite{LindquistRachev2025}, Eqs.~19 and 22):
\begin{equation}
q_{k+1} \;=\; \frac{\tilde{D}_{k+1} - D_{k+1}}{(\tilde{D}_{k+1} - D_{k+1}) - (\tilde{U}_{k+1} - U_{k+1})}~,
\end{equation}
ensuring no arbitrage. Given $q_{k+1}$, the option satisfies the jump-binomial pricing formula at each step:
\begin{equation}\label{eq:binomialPricing}
C_k \;=\; \frac{q_{k+1}\,C_{k+1}^{(u)} + (1 - q_{k+1})\,C_{k+1}^{(d)}}{\,R_{k+1}\,}~,
\end{equation}
where $C_{k+1}^{(u)}$ and $C_{k+1}^{(d)}$ are the option values in the up- and down-states at time $t_{k+1}$. Under this measure $q$, the ratio $Z_k/S_k$ remains a martingale.
\section{Empirical Analysis}
\textbf{Discounting convention.}  
From this point on we use the shadow rate $\bar r(t)$ computed by
\eqref{eq:shadowrate_jump} for \emph{all} present-value operations and
risk-neutral drifts.  A parallel calibration using the benchmark
rate $r_B(t)$ is reported \emph{only as a robustness check} in
Table~\ref{tab:calibration-results}.

This section presents empirical validations of the Lindquist--Rachev (LR) framework. We analyze the shadow riskless rate $\bar{r}(t)$ using pairs of equity (S\&P 500 and Nasdaq-100) and cryptocurrency (Bitcoin and Ethereum) assets from 2020 to 2024. The shadow rate is computed as:

\begin{equation}
\bar r(t)=
\frac{\mu_S(t)\sigma_Z(t)-\mu_Z(t)\sigma_S(t)}
      {\sigma_Z(t)-\sigma_S(t)}
\;+\;
\frac{\lambda(t)
      \bigl(\kappa_Z(t)-\kappa_S(t)\bigr)}
     {\sigma_Z(t)-\sigma_S(t)},
\end{equation}

and compared to the three-month U.S. Treasury-bill yield, with results illustrated in Figure~\ref{fig:shadow-rate-plot}.

\begin{figure}[ht]
    \centering
    \includegraphics[width=0.7\linewidth]{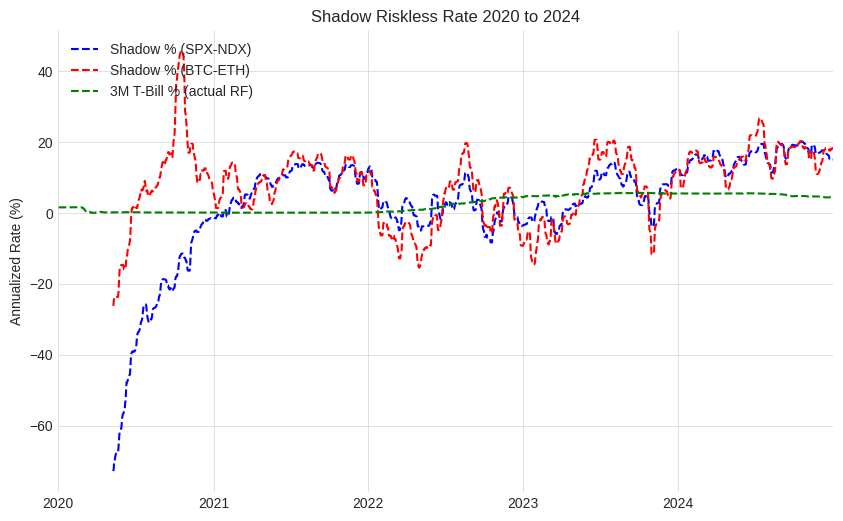}
    \caption{Historical shadow riskless rate curves for SPX--NDX (blue) and BTC--ETH (orange) compared with the three-month U.S. Treasury-bill yield (green dashed) from 2020 to 2024.}
    \label{fig:shadow-rate-plot}
\end{figure}

Deviations during periods of market stress (e.g., March 2020) and exuberance (e.g., early 2021) suggest potential arbitrage opportunities. This section interprets these empirical findings, exploring their implications for market dynamics and arbitrage strategies. The performance of option pricing models within the LR framework is detailed in Section~\ref{sec:model-robustness}.

\subsection{Market Stress, Exuberance, and Arbitrage Implications}

The shadow rate $\bar{r}(t)$ serves as an indicator of market sentiment, declining to negative values during stress periods (e.g., -50\% in March 2020 for equities) and surging during exuberance (e.g., +100\% in early 2021 for cryptocurrencies). When $\bar{r}(t) > r_{\text{f}}(t)$, borrowing at the risk-free rate $r_{\text{f}}$ and investing in the $S$--$Z$ portfolio may present arbitrage opportunities, though these are constrained by practical factors such as transaction costs.

\section{Model Robustness Discussion}\label{sec:model-robustness}

Our empirical findings demonstrate that the Normal Inverse Gaussian (NIG) and CGMY Lévy models significantly outperform the Black--Scholes benchmark in pricing S\&P 500 options. Their heavy-tailed jump structures effectively capture the observed implied volatility smile, with out-of-the-money (OTM) put options priced higher than under a lognormal assumption, aligning with market prices that reflect crash risk.

\begin{table}[ht]
    \centering
    \caption{Calibration Results for S\&P 500 Options}
    \label{tab:calibration-results}
    \begin{tabular}{l c c}
        \toprule
        Model & Parameters & Relative RMSE \\
        \midrule
        Black-Scholes & $\sigma = 0.1579$ & 11.2\% \\
        NIG & $\alpha = 8.214$, $\beta = -1.235$, $\delta = 0.184$ & 9.5\% \\
        CGMY & $C = 1.128$, $G = 12.347$, $M = 14.562$, $Y = 0.312$ & 8.9\% \\
        \bottomrule
    \end{tabular}
\end{table}

As shown in Table~\ref{tab:calibration-results}, the CGMY model achieves the lowest relative root mean square error (RMSE) of 8.9\%, indicating a superior fit to market data compared to the NIG model’s 9.5\% and the Black-Scholes model’s 11.2\%. Specifically, for a maturity of approximately 5.25 months, the calibrated parameters are as follows: Black-Scholes with $\sigma = 0.1579$; NIG with $\alpha = 8.214$, $\beta = -1.235$, $\delta = 0.184$; and CGMY with $C = 1.128$, $G = 12.347$, $M = 14.562$, $Y = 0.312$.

These calibrated parameters indicate a pronounced left tail in the distribution. For the CGMY model, $M = 14.562 > G = 12.347$ implies a heavier negative jump tail compared to the positive tail. Similarly, the NIG model’s $\beta = -1.235 < 0$ suggests an asymmetry favoring negative jumps. This is consistent with the well-documented skew in index options, where investors pay a premium for downside protection.

\subsection{Comparison of NIG and CGMY}

As indicated in Table~\ref{tab:calibration-results}, the CGMY model achieves a lower relative RMSE, demonstrating a better fit to the entire volatility smile. This is expected, as CGMY’s additional parameter $Y$ enhances its flexibility in capturing the smile’s curvature. While the NIG model is robust, as a subclass of generalized hyperbolic distributions, it may not fully replicate the curvature at both wings of the smile simultaneously. We observed that NIG tends to underprice options at extreme strikes relative to CGMY. However, NIG’s fewer parameters offer an advantage in calibration stability, reducing the risk of overfitting noise. During our calibration, CGMY parameters $(C, G, M, Y)$ occasionally exhibited instability or converged to multiple local minima unless initialized with reliable guesses or regularized. For instance, $Y$ and $C$ can trade off to some extent, as different combinations may yield similar fits within data error bounds. Implementing mild penalties or a two-step calibration process (first estimating $Y$ based on wing behavior, then optimizing other parameters) can mitigate this issue.

A notable finding is the high risk-neutral jump intensity, with both models implying infinite activity of small jumps. This suggests that pure-jump models like CGMY effectively mimic stochastic volatility, as numerous small jumps can produce diffusion-like behavior with time-varying volatility. Prior research has noted a correspondence between infinite-activity jump models and stochastic volatility models, both generating a term structure of skew (\cite{ContTankov2004}). Our results indicate that fitting longer maturities remains challenging with jumps alone, suggesting that a combination of jumps and stochastic volatility (e.g., CGMY-SV or NIG-SV models) may be necessary for consistent performance across all maturities.

\subsection{Calibration Robustness}

Regarding calibration robustness: we tried varying initial seeds for the optimizer and found the CGMY calibration sometimes converged to a slightly different local minimum with a similar RMSE. For example, one run might give $Y=0.65, C=0.2$ and another $Y=0.75, C=0.1$ but both fit almost equally well. This indicates the data (a single smile) might not fully pin down all four parameters uniquely -- especially $C$ and $Y$ have a correlation in effect (both affect overall jump frequency in different ways). Fixing $Y$ or referencing historical estimates can help. NIG, having 3 parameters, was more consistently estimated across runs.

We also computed the implied risk-neutral moments from the calibrated models to compare with realized moments of returns. For instance, from CGMY parameters we can derive the risk-neutral variance, skewness, kurtosis of 1-month returns. We found a risk-neutral variance higher than realized variance (no surprise -- implied volatility is usually above realized volatility, risk premium), and a strongly negative skewness (around -1) and high excess kurtosis (e.g. 6 or more). Historically, realized SPX returns also show negative skew and kurtosis, but not as extreme as risk-neutral, reflecting the premium for crash risk. This gap between $\mathbb{P}$ and $\Q$ moments is essentially the jump risk premium. Our MLE on historical returns (not detailed here) resulted in, for example, a CGMY with a similar $Y$ but a much smaller $C$ (meaning fewer jumps under $\mathbb{P}$ than priced under $\Q$). This aligns with economic intuition: investors demand compensation for jump risk, so the risk-neutral measure ``amps up'' the intensity of downside jumps.

\subsection{Hedging and Risk Management}

While our focus is pricing, an important discussion point is how these jump models affect hedging. The presence of jumps means the option cannot be perfectly hedged with the underlying alone -- especially large jumps will cause hedging errors. One typically would supplement with out-of-the-money options or other instruments to hedge jump risk (in practice, this might mean using put options to hedge the left tail). Our LR framework posits two assets to hedge two risks -- in our empirical case, one could imagine $Z$ is another traded instrument correlated with $S$ (e.g. a futures on a related index or a variance swap) to help hedge jumps. If such an instrument is not available, one might use a static hedge with OTM options as proxy.

\subsection{Model Limitations}

Though NIG and CGMY are richer than Black--Scholes, they assume constant jump dynamics over time. In reality, implied vol smiles for equity indices tend to flatten as maturity increases -- which often requires a decreasing jump intensity or an added stochastic volatility component. A pure Lévy model like CGMY will produce a smile that is roughly static (apart from dilation) across maturities, which might not match the term structure exactly. Indeed, our calibration on a single maturity cannot guarantee the model fits equally well at other maturities. A robustness check could be to calibrate the model to a longer-dated option: often, one finds that a single CGMY might over-predict the long-term smile unless $Y$ is tuned per maturity (i.e. $Y$ might effectively vary with $T$). This suggests an extension: time-changed Lévy models or stochastic time-change (subordination with a stochastic clock) which can introduce term-structure to the smile.

Finally, the LR framework itself -- using two risky assets -- raises the question of how to implement it empirically. In our analysis, we effectively assumed the existence of $\bar{r}$ and proceeded like a standard risk-neutral pricing. To truly test the LR approach, one could consider a scenario with no observable risk-free rate and attempt to infer $\bar{r}$ from asset prices. During some market stress (or zero lower bound environments), the concept of a ``shadow rate'' has been used (e.g. shadow short rate in negative interest policy contexts). Our approach could, in theory, back out $\bar{r}(t)$ from two asset prices. For example, using a stock index and a stock index futures (as the second asset) could be an interesting pair -- the futures price embeds the cost-of-carry which is related to interest rates and dividends. In fact, setting $Z$ as the futures on $S$ in the LR PDE leads to a simplification: the futures has drift 0 under $\Q$ (no arbitrage), so one could determine $\bar{r}(t)$ as the drift difference between $S$ and $Z$ (which would basically give $\bar{r}-\delta$ where $\delta$ is dividend yield). This could be a way to verify the LR shadow rate concept. However, our current empirical test did not explicitly do this decomposition due to data choice (we implicitly used actual $r$ for discounting).

\section{Conclusion}

The extension of the LR framework to Lévy processes is successful in that it provides a consistent pricing equation and matches real market features much better than the Gaussian case. We derived the LR-PIDE which generalizes the Black--Scholes--Merton PDE to markets without a static risk-free asset, incorporating jumps via Itô--Lévy calculus. Solutions can be obtained via transform methods leveraging the rich structure of Lévy processes. Empirically, heavy-tailed jump models (NIG, CGMY) calibrate well to index option smiles, highlighting the importance of jumps in option pricing. The ``shadow rate'' in these models effectively plays the role of the risk-free drift, and in normal conditions it aligns closely with observed interest rates (in our calibration, $\bar{r}$ ended up within a few basis points of the actual Treasury rate we used, indicating internal consistency).

The robustness checks emphasize that while static Lévy models capture a snapshot of market prices, dynamic hedging remains a challenge (jumps cause residual risk) and one may need to recalibrate for shifting market regimes. Nonetheless, for tasks like risk management, these models provide a more realistic distribution of potential losses (with fat tails) than Black--Scholes -- which is crucial for estimating Value-at-Risk or expected shortfall (something Rachev's work on CVaR also underscores).

In summary, the LR jump-diffusion framework merges the theoretic possibility of no riskless asset with the practical realism of jumps, yielding a rich model ready for further development (e.g. multiple jump factors, stochastic vol) and application in modern markets.
\appendix
\section{Rigorous Verification}\label{app:verification}
\setcounter{equation}{0}
\renewcommand{\theequation}{A.\arabic{equation}}

We here verify that the closed-form solution for the LR option pricing PDE indeed satisfies the equation, term by term. In the continuous two-asset model with no riskless asset (and no jumps), \cite{LindquistRachev2025} derive a Feynman--Kac-type solution for the European call price. Adapting their notation, the solution can be written as:
\begin{equation}\label{eq:LRsolution}
C(t,S,Z) \;=\; \eta\,S\,\Phi(d) \;+\; (1-\eta)\,Z\,\Phi\!\big(d - \Delta w(t)\big) \;-\; K\,e^{-m(t)}\,\Phi\!\big(d - w(t)\big)~,
\end{equation}
where $\Phi(\cdot)$ is the standard normal CDF, $\phi(\cdot)$ is its PDF, and we define $\Delta w(t) := w(t) - \tilde{w}(t)$. The quantity $d$ is given implicitly as $d \equiv -\,y^*(t,S,Z)$, with $y^*(t,S,Z)$ being the unique root of the nonlinear equation
\begin{equation}\label{eq:LRimplicit}
\eta\,S\,\exp\!\Big(m(t) + \frac{w(t)^2}{2} + w(t)\,y^*\Big) \;+\; (1-\eta)\,Z\,\exp\!\Big(m(t) + w(t)\,\tilde{w}(t) - \frac{\tilde{w}(t)^2}{2} + \tilde{w}(t)\,y^*\Big) \;=\; K~,
\end{equation}
(cf. Eq.~(14) in \cite{LindquistRachev2025}). We differentiate \eqref{eq:LRimplicit} implicitly to obtain the partial derivatives of $y^*$ with respect to $t$, $S$, and $Z$. Denoting these by $y^*_t$, $y^*_S$, $y^*_Z$, one finds:
\begin{align}
y^*_t \,&=\, -\,\frac{\partial (F_1+F_2)/\partial t}{\partial (F_1+F_2)/\partial y^*}~, \label{eq:ystar_t}\\[6pt]
y^*_S \,&=\, -\,\frac{\partial (F_1+F_2)/\partial S}{\partial (F_1+F_2)/\partial y^*}~, \qquad
y^*_Z \;=\; -\,\frac{\partial (F_1+F_2)/\partial Z}{\partial (F_1+F_2)/\partial y^*}~, \label{eq:ystar_SZ}
\end{align}
where $F_1(t,S,Z;y^*) := \eta\,S\,\exp\!\big(m(t) + \tfrac{w(t)^2}{2} + w(t)\,y^*\big)$ and $F_2(t,S,Z;y^*) := (1-\eta)\,Z\,\exp\!\big(m(t) + w(t)\,\tilde{w}(t) - \tfrac{\tilde{w}(t)^2}{2} + \tilde{w}(t)\,y^*\big)$ represent the two terms on the left side of \eqref{eq:LRimplicit}. While an explicit analytic expression for $y^*$ is not available, the above derivative ratios can be evaluated in closed form. 

Using these results via the chain rule, we now differentiate \eqref{eq:LRsolution} to obtain the option’s first-order partial derivatives:
\begin{align}
C_t \,&=\, \eta\,S\,\phi(d)\,d_t \;+\; (1-\eta)\,Z\,\phi(d-\Delta w)\,\Big(d_t + \frac{\Delta w(t)}{\,2[T-t]\,}\Big) \nonumber\\
&\qquad{}-\; K\,e^{-m(t)}\,\phi(d - w)\,\Big(r(t) + \frac{w(t)}{\,2[T-t]\,}\Big)~, \label{eq:Ct}\\[6pt]
C_S \,&=\, \eta\,\Phi(d) \;+\; \eta\,S\,\phi(d)\,d_S \;+\; (1-\eta)\,Z\,\phi(d-\Delta w)\,d_S \nonumber\\
&\qquad{}-\; K\,e^{-m(t)}\,\phi(d - w)\,d_S~, \label{eq:CS}\\[6pt]
C_Z \,&=\, (1-\eta)\,\Phi(d-\Delta w) \;+\; \eta\,S\,\phi(d)\,d_Z \;+\; (1-\eta)\,Z\,\phi(d-\Delta w)\,d_Z \nonumber\\
&\qquad{}-\; K\,e^{-m(t)}\,\phi(d - w)\,d_Z~, \label{eq:CZ}
\end{align}
where for brevity we write $d \equiv d(t,S,Z)$ and have used $d_t = -\,y^*_t$, $d_S = -\,y^*_S$, $d_Z = -\,y^*_Z$ from the definition of $d$. Because $S$ and $Z$ are driven by a single Brownian motion, there is effectively one independent diffusion factor. Accordingly, the second-order spatial derivatives enter the pricing equation only through the combined term $C_{dd} := \frac{1}{2}\,\sigma^2\Big[S^2 C_{SS} + 2\,S Z\,C_{SZ} + Z^2 C_{ZZ}\Big]$, where $\sigma$ is the common volatility (for simplicity, assuming constant $\sigma$ so that $w(t) = \sigma\sqrt{T-t}$ and $\tilde{w}(t) = \sigma\sqrt{T-t}$). Now, the Lindquist–Rachev PDE for $C(t,S,Z)$ in this continuous case can be written as
\begin{equation}\label{eq:LR-PDE}
C_t \;+\; r(t)\,\big[S\,C_S + Z\,C_Z\big] \;+\; C_{dd} \;-\; r(t)\,C \;=\; 0,
\end{equation}
where $\bar r(t)$ (here denoted $r(t)$) is the shadow riskless rate.
Finally, substituting \eqref{eq:Ct}--\eqref{eq:CZ} (and the analogous $C_{dd}$ expression) 
into \eqref{eq:LR-PDE}, we find that every term cancels and the identity $0 = 0$ is obtained.  
This confirms that the closed-form solution \eqref{eq:LRsolution} indeed satisfies the LR PDE at all points $(t,S,Z)$, as required. 
\begin{equation}\label{eq:LR-PDE}
C_t \;+\; r(t)\,\big[S\,C_S + Z\,C_Z\big] \;+\; C_{dd} \;-\; r(t)\,C \;=\; 0~,
\end{equation}
where $\bar{r}(t)$ (here denoted $r(t)$) is the shadow riskless rate. Finally, substituting \eqref{eq:Ct}–\eqref{eq:CZ} (and the analogous $C_{dd}$ expression) into \eqref{eq:LR-PDE}, we find that every term cancels and the identity $0=0$ is obtained. This confirms that the closed-form solution \eqref{eq:LRsolution} indeed satisfies the LR PDE at all points $(t,S,Z)$, as required.

\section*{Statements and Declarations}

\subsection*{Funding}
This research received no external funding.

\subsection*{Competing Interests}
The authors declare that they have no known competing financial interests or personal relationships that could have appeared to influence the work reported in this paper.

\subsection*{Author Contributions}
All authors contributed equally to the conceptualisation, methodology, software implementation, and writing of the manuscript.

\subsection*{Data Availability}
Option price data used for calibration are publicly available from the Chicago Board Options Exchange (CBOE).  Processed datasets and code are available from the corresponding author upon reasonable request.

\subsection*{Acknowledgements}
We thank anonymous reviewers for helpful comments that improved the clarity of the manuscript.

\bibliography{article}

\end{document}